\documentclass[12pt]{article}
\linespread{1.4}
\usepackage{stackengine}
\usepackage{multicol, eurosym}%, sgame}
\usepackage[utf8]{inputenc}
\usepackage[T1]{fontenc}
\usepackage{multirow,array,caption}
\usepackage{lmodern} % d'autres polices sont possibles

\usepackage{geometry}%,kpfonts}

\DeclareFontFamily{OT1}{pzc}{}
\DeclareMathAlphabet{\mathpzc}{OT1}{pzc}{m}{it}
\usepackage{setspace,graphicx}
\usepackage{amsmath,amsthm}
\usepackage{amssymb}
\usepackage{comment}
\usepackage{hyperref}

\usepackage[table]{xcolor}
\usepackage{booktabs}
\geometry{verbose,tmargin=3cm,bmargin=3cm,lmargin=3cm,rmargin=3cm}
%\onehalfspacing

\usepackage[authoryear]{natbib}
\frenchspacing

\newtheorem{proposition}{Proposition}

%macros
\def\argmax{\mathrm{argmax}}
\def\argmin{\mathrm{argmin}}
\def\Re{\mathbf{R}}
\def\ep{\varepsilon}
\def\sa{\sigma}
\def\Avg{\textrm{Avg}}
\def\PC{\mbox{P\&C} }
\def\ta{\theta}
\newcommand{\df}[1]{\textit{#1}}

\begin{document}

\title{\textsc{Price \& Choose}\thanks{We thank Danilo Coelho, Yukio Koriyama, Jean-François Laslier, Olivier Tercieux, Hal Varian, Rodrigo Velez and Dimitrios Xefteris for their useful remarks and comments. Mat\'ias N\'u\~nez  is supported by a grant of the French National Research Agency (ANR), "Investissements d'Avenir" (LabEx Ecodec/ANR-11-LABX-0047). }}
\date{ \textsc{March 2022}}
\author{Federico Echenique\thanks{\href{mailto:fede@econ.berkeley.edu}{fede@econ.berkeley.edu}. UC Berkeley .} \and Matías Núñez\thanks{\href{mailto:matias.nunez@polytechnique.edu}{matias.nunez@polytechnique.edu}. CREST \& Ecole Polytechnique.}}
\maketitle

\begin{abstract}
We describe a two-stage mechanism that fully implements the set of efficient outcomes in two-agent environments with quasi-linear utilities. The mechanism asks one agent to set prices for each outcome, and the other agent to make a choice, paying the corresponding price:  Price \& Choose.  We extend our  implementation result in three main directions: an arbitrary number of players, non-quasi linear utilities, and robustness to max-min behavior. Finally, we discuss how to reduce the payoff inequality between  players while still achieving efficiency.
\end{abstract}

\textbf{JEL Codes}: D71, D72.
	
\medskip
	
\textbf{Keywords:} Efficiency, Subgame-perfect implementation, Mechanism, Prices.

\section{Introduction}

Ted and Joanna Kramer are getting a divorce. It gets messy. Not only do they need to divide their assets, but they must decide on a complex arrangement for custody and visitation rights of their son Billy. Many possible outcomes are on the table, and the Kramers agree to use an outside arbitrator to find a solution.  The arbitrator, Judge Atkins, does not know the Kramers' preferences over the different possible solutions, but wants to find a good compromise despite his ignorance.

Our paper proposes a simple solution for Judge Atkins and the Kramers. The solution is optimal, in the sense of producing an efficient outcome in any equilibrium of the ensuing game of Kramer vs.\ Kramer, and relies on two key aspects of the problem: First, that Joanna and Ted know each other very well. Their preferences are common knowledge between them. Judge Atkins does not know the Kramers' preferences, but can leverage their shared knowledge. Second, Ted has a high-paying job as an advertising executive; so they have money available to facilitate an agreement. 

Our solution is a simple dynamic mechanism that we call \textit{Price and Choose} (\PC in the sequel). It works as follows:\begin{enumerate}
    \item The first mover sets up a zero-sum price vector that specifies a price for each of the different options.
    \item The second mover chooses one of the options as the outcome, and pays the first mover the specified price.
\end{enumerate}
We prove that any equilibrium outcome of \PC is Pareto efficient. The intuition behind our result is straightforward: The first mover's best choice is to make the second mover indifferent among all options; otherwise she is not playing optimally, as she can improve by slightly altering the price vector without modifying the choice of the second mover. This indifference implies that the second mover obtains her average utility across the options in equilibrium. The second part of our argument shows that the second mover chooses the best option(s) for the first mover. A different choice could not emerge in equilibrium, as the first mover would ``punish'' her by slightly modifying prices. This ends our argument  since the option that maximizes the payoff of the first mover, including transfers, is necessarily one that maximizes the sum of the utilities: an efficient option. 

Our paper contributes to the general theory of implementation, and to the more practical literature on arbitration. We proceed to discuss each of these connections in more detail.

Arbitration is a private dispute resolution method that does not involve courts. While the model in our paper is quite general, the problem faced by arbitrators, such as Judge Atkins, is a good application of several aspects of the model we develop.\footnote{Arbitration is not the only application of our work; note that our results extend to a setting with an arbitrary number of players so that agreements among countries or firms is also a good illustration of the current results.}  Our method allows the (two) involved parties to choose the arbitrator who will resolve the dispute.\footnote{As argued by \cite{barbera2022compromising}, practically all cross-border commercial disputes are resolved by arbitration.} These institutions may specify a structured selection procedure to help the parties exercise their right of choice, such as the American Association of Arbitrators (for instance, using vetoes, points, etc.). Two recent papers have proposed methods to improve the procedures used by practitioners. The first one, \cite{declippel2014}, proposes a ``shortlisting'' mechanism. Shortlisting works in only two stages, and the paper tests its validity in the lab. The second paper, \cite{barbera2022compromising}, considers procedures with more steps, but achieving less inequality among players. A key common ingredient in the problems studied in these two papers is that they do not allow for monetary transfers between the players. 

The lack of transfers is a realistic feature of some problems, but not of others. For problems like the Kramers', it makes sense to assume that money is available, and that it may be used to facilitate an agreement.  Economic theory has shown that introducing transfers (or prices) can serve as a powerful coordination tool and lead to welfare gains. Our proposal relies on transfers, but can accommodate rather general preferences over money. In the paper, we first consider a setting where preferences are quasilinear, and two players need to reach an agreement. Our goal is to design mechanisms that implement the utilitarian goal (that is, players end up maximizing the sum of the individual utilities). With quasilinear preferences, utilitarianism captures economic efficiency exactly. Then we generalize the result to a setting with separable, but non-quasilinear, preferences over transfers. In consequence, our results allow for example, for general attitudes towards risky monetary lotteries. 

%Utilitarianism is probably the most prominent theory of welfare. It measures collective welfare as the sum of the subjective satisfactions of the different members of society, and encapsulates efficiency in environments with quasilinear utilities.\footnote{While a debate of utilitarianism is out of the scope of this paper, it should be noted that this welfare measure can be grounded in the basis of fairness (see e.g.\  \cite{piacquadio2017fairness}).}

Next, we turn to a discussion of implementation. Implementation theory studies procedures for collective decision-making in the presence of selfish agents who may disagree on their preferences over outcomes. So-called full implementation looks for procedures that induce a desirable outcome, regardless of equilibrium selection. It is often difficult to achieve when there are only two agents, as in the example with Joanna and Ted. Our paper considers full (subgame perfect) implementation in a general social choice problem with monetary transfers. The \PC mechanism we propose has the benefit of being natural and bounded, in contrast with some well-known proposals in the literature on implementation that rely on integer games and unbounded message spaces to rule out equilibria (see \cite{jackson1992implementation} for a critical review). Implementation is often challenging when there are only two agents, but our baseline analysis of the \PC applies precisely to the model with two agents. In fact, the extension to $n$ agents works by recursively applying our result for two agents.

The literature on implementation with transfers is not new. The classic demand-revealing mechanisms (see \cite{clarke1971multipart} and \cite{groves1976information}) achieve implementation in dominant strategies, even though they fail to be budget-balanced. These mechanisms require utility to be quasi-linear in transfers. Our mechanism, in contrast, achieves full implementation in subgame-perfect equilibrium; but it is budget balanced, and does not require quasilinear utility. \cite{groves1977optimal} describe a mechanism that yields efficient Nash equilibria for the public-goods problem, see \cite{groves1979efficient} for an excellent summary. The more recent literature has shifted its attention to simple mechanisms: \cite{varian1994solution} designs compensation mechanisms that achieve efficiency in the presence of externalities. Such mechanisms are not balanced off-equilibrium, whereas the \PC mechanism is balanced by definition.\footnote{The compensation mechanisms in \cite{varian1994solution} rely on fines to ensure that both players accurately report each other's ``type,'' which pushes transfers to be balanced in equilibrium; with three players and more, compensation mechanisms rely on classical implementation ideas to make each player's payment do not depend on his own report. The \PC mechanism does not depend on this logic since it gives each player either the possibility of setting a price vector (except the last one) which balances the transfers.} Similarly, \cite{jackson1992implementing} describes simple mechanisms that implement efficient allocations in undominated Nash equilibria; yet, the implementation result only applies with indivisible public goods and quasi-linear utilities. As we have mentioned above, our results extend beyond this setting.

The main result is stated in a stylized setting, but our arguments turn out to generalize in various ways. Specifically, we show that the \PC mechanism can be adapted, and efficiency still implemented, in the following variations of our basic model:
\begin{itemize}
    \item $n$ players. Moving in order, all players but the last one, choose  a price vector that the next player faces. Prices must add to zero across outcomes. The last player chooses an outcome, say $x$. Then each player pays their predecessor the price that they demanded for $x$. Here the first mover receives a transfer but does not make any, the last mover makes a transfer but does not receive any, whereas each of the other players receives and makes transfers.   Our two-player result can be applied ``recursively'' to show that the mechanism implements the efficient options. (Section~\ref{sec:manyplayers})
    \item Non-quasi linear preferences. We consider a model in which agents have general additively separable preferences over money and outcomes, and show that the main result of the paper continues to hold. (Section~\ref{sec:nql})
    \item Robust implementation. We relax the assumption that players play an exact subgame-perfect Nash equilibrium. Instead, the agents are $\ep$-maximizing, and one player makes a pessimistic worst-case assumption over the possible $\ep$-optimizing choices of the other player. (Section~\ref{sec:robust}) 
    \item Endogenous order of play. We tackle the implied first-move advantage in \PC by having players bid for the role of moving first. (Section~\ref{sec:bidpandc}). There is an alternative approach to dealing with the first-mover advantage by constraining prices to add up to a non-zero constant. This is briefly discussed after our basic result is stated. 
\end{itemize}

The rest of the paper is organized as follows. Section \ref{sec:review} reviews the literature. After laying down the model in Section \ref{sec:setting}, Section \ref{sec:pandc} presents the Price \& Choose mechanism for two players and presents the implementation argument. Sections \ref{sec:manyplayers} and \ref{sec:nql} respectively extend the model to an arbitrary number of players and to non quasi-linear utilities. Section \ref{sec:robust} presents the robustness of the mechanism with respect to adversarial behavior. Finally, Section \ref{sec:bidpandc} discusses the other previously mentioned technical extensions.
 
 \section{Review of the literature \label{sec:review}}
 
Classical results in implementation say that, with two players, and in the absence of transfers, the only Pareto efficient rule that is implementable is dictatorship (see \cite{maskin1999nash} and \cite{hurwicz1978construction}). While more permissible results arise when domains are restricted (\cite{moore1990nash} and \cite{dutta1991necessary}), or when mechanisms are not deterministic (\cite{laslier2021solution}), a commonly held view is that it is hard to design mechanisms with desirable properties in two-player settings. This has led the literature to consider the short mechanisms (short in the sense of few steps) proposed by \cite{declippel2014} and \cite{barbera2022compromising}. 

Our \PC mechanism deviates from these papers by working in an environment with transfers, and by being dynamic in nature---one player sets up a price and the other player chooses an option. The resulting solution concept is subgame-perfect Nash equilibrium. One strand of the literature is concerned with the design of mechanisms with transfers. Beyond the papers previously cited,  \cite{hurwicz1977dimensional}, \cite{dutta1994nash}, \cite{sjostrom1994implementation} and \cite{saijo1996toward} study Nash implementation when players announce prices and quantities. Among other findings, they prove that the no-envy and Pareto correspondence are implementable. %They do not achieve the implementation of the set of efficient options. 
%FE: Esto ultimo no lo entiendo pues si implementan la Pareto correspondence como no son las efficient options.
\cite{moore1988subgame} prove that in the quasi-linear setting, any social choice rule is implementable with two players. Yet, this result has been subject to several criticisms (see \cite{aghion2012subgame}); and \cite{moore1988subgame} themselves write that their mechanisms ``are far from simple; players move simultaneously at each stage and their strategy sets are unconvincingly rich.'' Our \PC mechanism is arguably very simple, and uses a natural economic framework. It also continues to work, even when we deviate from the quasi-linear setting (Section~\ref{sec:nql}).\footnote{Our model with transfers, but non-quasi-linear preferences, is related to the recent literature on  matching problems with imperfectly transferable utility such as \cite{legros2007beauty}, \cite{chiappori2016econometrics} and \cite{galichon2019costly}.}

A literature in social choice theory (see \cite{green93}, \cite{chambers2005multi}, and \cite{chambers2005additive}) considers similar environments to ours, and studies efficient solution axiomatically. This work is, however, not concerned with implementation.

In dynamic environments with transfers, the \PC mechanism is also related to  Gary Becker's ``Rotten Kid theorem;'' see \cite{bergstrom1989fresh} for a formal analysis. Bergstrom  shows how to achieve efficiency in the Rotten Kid two-stage game, where a benevolent planner makes transfers to several selfish players. His main result involves quasi-linear preferences, but also discusses extensions that do not involve these preferences. To cite the most relevant of them, \cite{chen2018getting} considers two-stage stochastic mechanisms that achieve full implementation  under  initial  rationalizability  in  complete  information  environments. \cite{chen2021maskin} consider implementation allowing for lotteries and monetary transfers in the mechanism and characterize the implementable rules. This is, of course, different from \PC which is not a random mechanism. 

Given our motivation, we should mention the literature on dissolving partnerships: for example \cite{cramptongibbonsklemperer}. The literature is usually focused on mechanism design, not full implementation, and considers more restrictive environments than we have studied here. \cite{CRAWFORD197910} and \cite{mcafee1992amicable} consider variations of the ``cut and choose'' mechanisms, which were an inspiration of sorts for our mechanism. The name ``price and choose'' is meant to highlight this connection. Of course, cut and choose (or divide and choose) make sense for allocation problems, not for the general social choice environments we have studied here. Among the literature on cake-cutting, the work of \cite{nicolo2017divide} is relevant; they consider sequential Divide-and-compromise rules under monetary transfers, which share some common ideas with P\&C, but are designed for situations where a collectively owned indivisible good is to be divided between two agents. Furthermore, \cite{brown2016costs} study these procedures experimentally, and find evidence that second movers tend to be adversarial, an assumption related to our extension in section \ref{sec:robust}.

Finally, we should also mention the literature that crafts mechanisms  implementing efficient options, such as \cite{perez2002choosing}, \cite{ehlers2009choosing} and \cite{eguia2021implementation}. The common feature of the mechanisms designed by these papers is that they are simultaneous, and rely on lotteries as tie-breaking devices. Our approach differs from theirs in that we design a deterministic dynamic (with sequential choices) mechanism. We think of this distinction as an advantage. On the one hand, \cite{declippel2014} and \cite{camerer2016evaluating} have shown that subgame perfect equilibrium is a good predictor in the lab for a particular mechanism, namely shortlisting. The shortlisting mechanism is closely related to \PC, since the first-mover proposes a list of alternatives, and her opponent selects an alternative from the proposed list. On the other hand, Nash equilibrium often performs poorly in experimental designs with simultaneous interactions and lotteries. 

 \section{The Model \label{sec:setting}}
 
 \textbf{Utilities.}  We consider a finite set $N$ of players, with generic element $i$, who bargain over a finite set of \df{options} denoted by $A=\{a_1,a_2,\ldots,a_k\}$. Players have quasi-linear utility functions, defined over the options in $A$ and money. So player $i$ has a utility function $u_i:A\to \Re$, and enjoys a utility of $u_i(a)+t_i$ if the outcome is $a\in A$ and they receive a monetary transfer $t_i$.\footnote{The assumption of quasi-linearity is relaxed in Section~\ref{sec:nql}.}

 For each player $i$, we write $\Avg_i$ to denote $\frac{1}{k}\sum_{j=1}^k u_i(a_j)$, the average utility over $A$ for player $i$. Utilitarian welfare from an option $a$ is $\sum_i u_i(a)$, and $\textsc{max}(u)= \max
\{\sum_{i=1}^{n}u_i(a_j):1\leq j\leq k \}$ denotes the maximum utilitarian welfare.  

 \noindent \textbf{Outcomes.} For each $i$, $t_i$ denotes the monetary transfer that player $i$ obtains and $t=(t_1,\ldots,t_n)\in \mathbf{R}^n$ is a vector of transfers. An \df{allocation}, or \df{outcome}, $(a,t_1,\ldots,t_n)$ is a decision (that is, an option in $A$) coupled with a vector of transfers. %When a player has quasi-linear preferences over allocations, the utility of player $i$ for each allocation $(a,t_1,\ldots,t_n)$ equals $\tilde{u}_i((a,t_1,\ldots,t_n))=u_i(a)+t_i$.
 
 \noindent \textbf{Welfare.} An allocation $(a,t_1,\ldots,t_n)\in A\times \Re^n$ is \df{Pareto optimal} if there is no other allocation $(a',t'_1,\ldots,t'_n)\in A\times \Re^n$ with 1) $u_i(a)+t_i\leq u_i(a')+t'_i$ for all $i$, 2) $u_i(a)+t_i< u_i(a')+t'_i$ for all least one $i$, and 3) $\sum_{i=1}^n t'_i\leq \sum_{i=1}^n t_i$. An outcome $a\in A$ is \df{efficient} if $ \textsc{max}(u)= \sum_{i=1}^n u_i(a)$, so it achieves maximum utilitarian welfare.   It is well known that an allocation $(a,t_1,\ldots,t_n)\in A\times \Re^n$ is Pareto optimal if and only if $a$ is efficient.
 
\noindent \textbf{Subgame perfect implementation.}  We provide an informal definition of subgame-perfect implementation because the paper is devoted to a particular mechanism, so providing a formal and general definition is a big distraction. 

A mechanism specifies a game-form: this means that, when the mechanism is coupled with utility functions over outcomes for each of the players, it defines an extensive-form game. For a mechanism $\ta$, let $\mathrm{SPNE}^{\theta}(u)$ be the set of subgame perfect equilibria when the utility profile is $u$. A mechanism subgame perfect implements the set of efficient options if for any $u$, any member of $\mathrm{SPNE}^{\theta}(u)$ selects an efficient option and any efficient option is selected by some member of $\mathrm{SPNE}^{\theta}(u)$. 
%A mechanism $\theta:M^n \rightarrow A\times \mathbf{R}^n$ determines an option in the set $A$ for each announcement of the different players. As previously discussed, we focus on dynamic mechanisms with perfect information. For any non-terminal history, a subgame is the part of the game that occurs after the history. Thus, a strategy specifies for each possible history, the action that the player adopts. A subgame perfect equilibrium of the mechanism $\theta$ is a strategy profile that induces a Nash equilibrium in each possible subgame of the game. Let $SPNE^{\theta}(u)$ be the set of subgame perfect equilibria when the utility profile is $u$. A mechanism subgame perfect implements the set of efficient options if for any $u$, any member of $SPNE^{\theta}(u)$ selects an efficient option and any efficient option is selected by some member of $SPNE^{\theta}(u)$. Since these definitions are standard, we omit their formal statement for the sake of brevity.

 \section{Price \& Choose mechanism\label{sec:pandc}}
 
We proceed with our baseline result by first describing the Price \& Choose mechanism for two players, and showing that it achieves efficient implementation in subgame-perfect Nash equilibrium.
 
Consider an instance of our model with two players. The Price \& Choose mechanism requires player 1 to set up a price vector, that is a price for each option. Prices may be positive or negative, and ``budget balanced,'' in the sense of having to add up to zero. Player 2 then chooses an alternative in $A$, and pays player 1 the price that she demanded for that alternative. 

The formal definition of the Price \& Choose mechanism (P\&C) follows. Let $P=\{p\in\Re^{|A|}:\sum_{j=1}^k p_j =0 \}$

 \noindent\textbf{Timing.}

\medskip

\noindent 1. Player 1 chooses a price vector $p\in P$.

\noindent 2. Player $2$ chooses an option $a\in A$ and transfers $p_a$ to Player 1.

\medskip
 
For any option $a$ chosen at the second stage and any price vector $p$ set in the first stage, the payoffs associated to this mechanism equal $g(p,a)=(g_1(p,a),g_2(p,a))=(u_1(a)+p_a,u_2(a)-p_a)$.

Now it is obvious that the \PC mechanism defines an extensive form game, given the players' utility functions. A strategy profile in the game induced by the \PC mechanism is a pair $\sigma=(\sigma_1,\sigma_2)$, with $\sigma_1\in P$ and $\sigma_2:P\to A$.  It is also obvious that there exists at least one subgame perfect Nash equilibrium in pure strategies, as this is a finite perfect-information game.  

We say that the \PC mechanism \df{implements the efficient options in subgame-perfect equilibrium} if, for any subgame-perfect Nash equilibrium $\sa=(\sa_1,\sa_2)$, $\sa_2(\sa_1)$ is efficient; and, conversely, for any efficient outcome $a\in A$, there is a  subgame-perfect Nash equilibrium $\sa=(\sa_1,\sa_2)$ with $a=\sa_2(\sa_1)$.

\begin{proposition} \label{prop:basic} \PC subgame-perfect implements the set of efficient options.
\end{proposition}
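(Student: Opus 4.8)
The plan is to argue in two directions, matching the definition of subgame-perfect implementation: (i) every SPNE outcome is efficient, and (ii) every efficient option is supported by some SPNE. Throughout, I would rely on the backward-induction structure of this finite perfect-information game: first analyze Player 2's optimal response to an arbitrary price vector $p$, then analyze Player 1's optimal choice of $p$.

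\textbf{Step 1: Player 2's best response.} Fix $p\in P$. Player 2 chooses $a$ to maximize $u_2(a)-p_a$; write $B(p)=\argmax_{a\in A}\{u_2(a)-p_a\}$ for the (nonempty) set of best responses. A subgame-perfect strategy $\sigma_2$ must satisfy $\sigma_2(p)\in B(p)$ for every $p$. The key quantity is $V_2(p)=\max_{a}\{u_2(a)-p_a\}$, the value Player 2 secures; note $V_2(p)\ge \Avg_2$ always, since averaging $u_2(a)-p_a$ over $A$ gives $\Avg_2$ (because prices sum to zero), so the max is at least the average. Moreover $V_2(p)=\Avg_2$ precisely when Player 2 is indifferent across all of $A$, i.e.\ when $u_2(a)-p_a$ is constant in $a$; there is a unique such price vector, $p^\ast_a = u_2(a)-\Avg_2$.

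\textbf{Step 2: Player 1's incentives force indifference.} Let $\sigma=(\sigma_1,\sigma_2)$ be an SPNE and let $a^\ast=\sigma_2(\sigma_1)$. Player 1's payoff is $g_1(\sigma_1,a^\ast)=u_1(a^\ast)+(\sigma_1)_{a^\ast}$. I claim $\sigma_1=p^\ast$, i.e.\ Player 1 must make Player 2 indifferent. Suppose not: then $B(\sigma_1)$ is a proper subset of $A$, and I would show Player 1 has a profitable deviation. The idea is that Player 1 can raise the price on the currently-chosen option $a^\ast$ by a small $\delta>0$ and lower prices elsewhere by $\delta/(k-1)$; for $\delta$ small enough this does not change the set of best responses in a way that hurts Player 1 — one must be slightly careful because lowering other prices could make some other option tie or become strictly better. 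The clean way is: Player 1 deviates to a price vector that keeps $a^\ast$ (strictly) the unique best response while raising $p_{a^\ast}$; this is possible iff $a^\ast$ was strictly preferred, but if $a^\ast$ was only weakly preferred (ties), Player 1 can instead shift toward $p^\ast$. The cleanest argument: in equilibrium Player 1 receives $g_1 = u_1(a^\ast)+(\sigma_1)_{a^\ast}$, and since $g_1+g_2 = u_1(a^\ast)+u_2(a^\ast)\le \textsc{max}(u)$ and $g_2 = u_2(a^\ast)-(\sigma_1)_{a^\ast} = V_2(\sigma_1)\ge \Avg_2$, we get $g_1 \le \textsc{max}(u)-\Avg_2$. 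Conversely, by deviating to $p^\ast$, Player 1 guarantees that whatever $a\in B(p^\ast)=A$ Player 2 picks, Player 1 gets $u_1(a)+p^\ast_a = u_1(a)+u_2(a)-\Avg_2$; a subgame-perfect $\sigma_2$ will, at $p^\ast$, be free to choose any option, but Player 1 can do no worse than $\min_a (u_1(a)+u_2(a))-\Avg_2$ and — here is the subtle point — Player 1 will in fact want to perturb $p^\ast$ slightly to break ties in her favor. So the honest structure of the proof is: (a) show $V_2(\sigma_1)=\Avg_2$, hence $\sigma_1=p^\ast$, by ruling out $V_2(\sigma_1)>\Avg_2$ via a small price perturbation on $a^\ast$ that strictly raises Player 1's transfer while keeping $a^\ast$ optimal for Player 2; (b) conclude $g_2=\Avg_2$ and $g_1 = \textsc{max}(u)-\Avg_2$ would require $a^\ast$ efficient.

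\textbf{Step 3: the chosen option is efficient.} Once we know $\sigma_1=p^\ast$, Player 2 is indifferent over all of $A$, so subgame perfection permits $\sigma_2(p^\ast)$ to be any option. But for $(\sigma_1,\sigma_2)$ to be an equilibrium, $\sigma_1=p^\ast$ must be optimal for Player 1 given $\sigma_2$. Player 1's payoff is $u_1(\sigma_2(p^\ast))+p^\ast_{\sigma_2(p^\ast)} = u_1(\sigma_2(p^\ast))+u_2(\sigma_2(p^\ast))-\Avg_2$. If $\sigma_2(p^\ast)$ is not efficient, let $a'$ be efficient. I would exhibit a deviation $p'$ for Player 1 — a slight perturbation of $p^\ast$ that makes $a'$ the unique best response for Player 2 (lower $p'_{a'}$ by $\delta$, raise the others by $\delta/(k-1)$) — yielding Player 1 a payoff of $u_1(a')+u_2(a')-\Avg_2 - \delta$, which for small $\delta$ exceeds $u_1(\sigma_2(p^\ast))+u_2(\sigma_2(p^\ast))-\Avg_2$ since $a'$ has strictly larger utilitarian welfare. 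This contradiction shows $\sigma_2(p^\ast)$ is efficient, completing direction (i). Note this also pins down Player 1's equilibrium payoff as the supremum $\textsc{max}(u)-\Avg_2$, which is approached but, if the efficient option is not unique in a suitable sense, attained exactly because at $p^\ast$ Player 2 can be induced to pick the efficient one — one must check the equilibrium existence direction handles this.

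\textbf{Step 4: every efficient option is an SPNE outcome.} Given an efficient $a'$, construct $\sigma_1 = p^\ast$ and $\sigma_2$ with $\sigma_2(p^\ast)=a'$ and, off path, $\sigma_2(p)\in B(p)$ chosen (for $p\ne p^\ast$) so as to minimize Player 1's payoff among Player 2's best responses — this makes deviations maximally unattractive. One then verifies Player 1 cannot profitably deviate: any $p\ne p^\ast$ gives Player 2 a strict best-response set $B(p)\subsetneq A$, and with the adversarial tie-breaking Player 1's payoff from $p$ is at most $\max_{a\in B(p)} (u_1(a)+u_2(a)) - \text{(something)}$, which I would bound below $u_1(a')+u_2(a') - \Avg_2$. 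Here existence of a pure SPNE (already noted in the text, as the game is finite with perfect information) guarantees the construction can be completed consistently.

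\textbf{Main obstacle.} The delicate point is Step 2/Step 3: handling ties. When Player 2 is indifferent among several options, "Player 1 can slightly perturb prices to punish Player 2" must be made precise — a perturbation that singles out a desired option while keeping the budget-balance constraint, and a simultaneous argument that in equilibrium no such profitable perturbation exists. Concretely, the crux is showing that $V_2(\sigma_1)>\Avg_2$ is impossible: one must verify that raising the price on $a^\ast$ by a small amount (and spreading the decrease over the other options) leaves $a^\ast$ still optimal for Player 2 — this uses that $a^\ast$ was \emph{strictly} optimal when $V_2(\sigma_1)>\Avg_2$ is ruled out option-by-option, or more carefully requires choosing the perturbation direction inside the face of $B(\sigma_1)$. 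Making the open/closed topology of best-response sets interact correctly with budget balance is the one place where care is genuinely needed; the rest is bookkeeping.
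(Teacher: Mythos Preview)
Your approach is essentially the paper's: establish the unique indifference price $p^*$, show $\sigma_1=p^*$ via a price perturbation, show $\sigma_2(p^*)$ must be efficient via another perturbation, and finally construct an SPNE for each efficient option. The inequality $V_2(p)\geq \Avg_2$ and the identification of $p^*$ in your Step~1 correspond to the paper's Step~A; your Steps~2--4 line up with its Steps~B--D.

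The one place where you leave a genuine loose end is exactly what you flag as the ``main obstacle'': your suggested perturbation in Step~2 (raise $p_{a^*}$ by $\delta$, lower all others by $\delta/(k-1)$) does \emph{not} work when $B(\sigma_1)$ contains options besides $a^*$, since lowering their prices makes them strictly better than $a^*$ and Player~2 may switch. You recognize this but do not resolve it. The paper's fix is concrete: let $H=B(\sigma_1)$ and pick some $a_j\notin H$ (such $j$ exists precisely because $V_2(\sigma_1)>\Avg_2$). Then raise $p_{a_i}$ by $\ep$ for the chosen $a_i=a^*$, raise $p_{a_h}$ by $2\ep$ for every other $h\in H$, and dump the compensating decrease entirely on $p_{a_j}$. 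For small $\ep$, $a_i$ becomes the \emph{unique} best response and Player~1 is strictly better off. That single construction closes the gap you identified; the rest of your outline is correct bookkeeping. In Step~4, note that you need not invoke adversarial off-path tie-breaking: once $\sigma_1=p^*$, any deviation $p\neq p^*$ either fails to raise the price on the option Player~2 then picks, or lowers it, so Player~1's payoff is bounded by $\textsc{max}(u)-\Avg_2$ regardless of how $\sigma_2$ breaks ties off path.
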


\begin{proof}
The proof is divided in four steps: the existence of a price vector that makes Player 2 indifferent between all options (Step A), the proof that this price vector is the unique one compatible with equilibrium behavior (Step B), the proof that any equilibrium selects an efficient option (Step C) and finally (Step D) the construction of an equilibrium selecting an efficient option and the converse construction: for each efficient option, there is an equilibrium implementing it. 

\noindent \textbf{Step A:} there is one and only one price $p^*\in P$ with $g_2(p^*,a_j)=u_2(a_j)-p^*_j$ being constant in~$j$. 

Indeed if $\theta= u_2(a_j)-p_j$, then \[ 
k \theta = \sum_{j=1}^k u_2(a_j)- \sum_{j=1}^k p_j = k \Avg_2,
\] as $p\in P$. Therefore, $p^*_j = u_2(a_j) - \theta = u_2(a_j) - \Avg_2$.

\noindent\textbf{Step B:} If $\sigma$ is a subgame-perfect Nash equilibrium, then $\sigma_1=p^*$.

Let $\sigma$ be a subgame-perfect equilibrium, $p=\sigma_1$ and $a_i=\sigma_2(p)$. We claim that $g_2(p,a_j)=u_2(a_j)-p_j$ is constant in~$j$. Suppose then, towards a contradiction, that there is $j$ and $h$ with $g_2(p,a_h)>g_2(p,a_j)$. Let $H$ be the set of $h$ with $a_h\in \argmax\{u_2(a_j)-p_j:1\leq j\leq k \}$, and note that $i\in H$ while $j\notin H$. Consider the price vector $p'$ that is identical to $p$ except that $p'_i=p_i+\ep$, $p'_h=p_h+2\ep$ for $h\in H\setminus\{ i\}$, and $p'_j=p_j +\ep - 2\ep | H |$. For $\ep>0$ small enough, Player $2$ finds it uniquely optimal to choose $a_i$, while player 1's payoff is strictly greater. A contradiction.

Since  $g_2(p,a_j)=u_2(a_j)-p_j$ is constant in~$j$, by Step A, $p=p^*$.

\noindent \textbf{Step C:} If $\sigma$ is a subgame-perfect Nash equilibrium, then $\sigma_2(p^*)\in \argmax \{u_1(a_j)+u_2(a_j):1\leq j\leq k \}$.

Suppose, towards a contradiction, that $\sigma_2(p^*)=a_j$ and that $u_1(a_j)+u_2(a_j)<u_1(a_i)+u_2(a_i)$. By definition of $p^*$, however, $u_2(a_j)-p^*_j = u_2(a_i)-p^*_i$. Suppose now that player 1 chooses a price vector $p'\in P$ that is identical to $p^*$, except in that $p'_i=p_i^*-\epsilon$ and $p'_j=p^*_j+\epsilon$, for $\epsilon>0$. Then we have that $\sigma_2(p')=a_i$, as now $a_i$ is the uniquely optimal choice for player 2, while
\[ u_1(a_j) + p^*_j = u_1(a_j)+u_2(a_j) - \Avg_2 < 
u_1(a_i)+u_2(a_i) -\epsilon - \Avg_2 = u_1(a_i) + p'_i,
\] for $\epsilon>0$ small enough, contradicting that $\sigma$ is a subgame-perfect Nash equilibrium.

%\noindent \textbf{Step D:} There is a subgame-perfect Nash equilibrium that results in the efficient outcome.

%Let $\sigma_1=p^*$ and $\sigma_2(p)$ be any sequentially rational choice of outcome, as long as it selects the efficient outcome after $p^*$. This pair of strategies constitutes an SPNE. Indeed, the strategy $\sigma_2(p)$ is optimal given that $2$ is indifferent between all outcomes after $p^*$. This implies that, by backward induction, the strategy $\sigma_1$ is optimal too since, by making Player 2 choosing an efficient alternative while being indifferent, this implies that Player 1's payoff is maximal.

\noindent \textbf{Step D:} For every efficient outcome, there is a subgame-perfect Nash equilibrium that selects it.

Observe that Step A is a general remark on the mechanism whereas Steps B and C
deal with any subgame perfect equilibrium. This means that in any subgame perfect equilibrium: $\sigma_1=p^*$ and $\sigma_2(p^*)\in \argmax \{u_1(a_j)+u_2(a_j):1\leq j\leq k \}$. In other words, any subgame perfect equilibrium outcome is efficient. Let $a_j$ be an efficient outcome. Consider the strategy profile $\sigma_1=p^*$ and $\sigma_2(p^*)=a_j$. Player 2 is playing a best response since $p^*$ is making him indifferent between all options. 
Player 1's payoff equals :

$$u_1(a_j)+p^*_j=u_1(a_j)+u_2(a_j)-\Avg_2=\textsc{max}(u)-\Avg_2.$$

This means that, given $p^*$, Player 1 is indifferent among all efficient options. Now, suppose that Player 1 alters the price vector to force Player 2 to choose another efficient option. Any option $a_h$ with a price $p_h>p^*_h$ will not be chosen by Player 2. This means that if Player 1 wants to induce Player 2 to choose some option $a_l$ he must set $p_l<p^*_l$; however, this implies that if Player 2 chooses $a_l$ Player 1's payoff is lower or equal than $\textsc{max}(u)-\Avg_2$ since Player 1's payoff is increasing on $p_l$, a contradiction. Hence, any efficient option is selected in some subgame perfect equilibrium.

Observe that, in Step D, the existence of a subgame-perfect Nash equilibrium is established.
\end{proof}

The \PC mechanism confers
\color{black}the first player an advantage, as the equilibrium payoffs to Player 2 are always $\Avg_2$, while Player 1 gets a payoff that is greater than $\Avg_1$. We consider this issue in detail in Section~\ref{sec:bidpandc}, but we note here that the assumption that prices in $P$ add to zero may be modified to avoid (or exacerbate) the payoff imbalance. 

Indeed, for any constant $\alpha$, we may define the \PC mechanism with Player 1 choosing a price in $P_\alpha=\{p\in \Re^{|A|}: \sum_{j=1}^k p_j =\alpha \}$.  By considering a modified game, with $P=P_0$ as above, but in which Player 2's utility is $u_2-\alpha/k$, and 1's utility is $u_1+\alpha/k$, we see that \PC again subgame-perfect implements the efficient alternative. Now, however, Player 2's payoff is $\Avg_{2}-\frac{\alpha}{k}$ while 1's payoff is $\textsc{max}(u)-\Avg_{2}+\frac{\alpha}{k}$. A negative value of $\alpha$ serves to balance the payoffs to the two agents.

%It is noteworthy to remark that the sum of prices in the price vector does not play any role in the implementation argument.\footnote{We thank Yukio Koriyama for pointing this out.} Namely, for each constant $\kappa$, we can define a \PC mechanism with $\sum_{i=1}^k p_i=\kappa$. Each such mechanism implements the set of efficient options. The role of $\kappa$ is to modify the payoff distribution between players. More precisely, as an inspection of the proof of Proposition 1 shows, Player 2's payoff equals 
%$\Avg_{2}-\frac{\kappa}{k}$ and thus Player's payoff is equal to $\textsc{max}(u)-\Avg_{2}+\frac{\kappa}{k}$. 

An outside agent like Judge Atkins, who does not know the utilities of players 1 and 2, may want to use $P_\alpha$ in order to balance the  \PC mechanism, but not know the proper value of $\alpha$. It is, however, possible to endogenize the needed value of $\alpha$. One idea is to proceed as follows:
\begin{enumerate}
    \item Player 1 proposes a real number $\alpha$.
    \item Player 2 decides between being the  chooser (so that Player 1 is the proposer) or the proposer (and Player 1 becomes the chooser).
    \item The proposer set-up a price vector with $p\in P_\alpha$ and
    \item The chooser selects an alternative $a_j$ and pays $p_j$ to the proposer.
\end{enumerate}

By replicating the arguments in Proposition~\ref{prop:basic}, one can show that, in equilibrium, $\alpha=\frac{k}{2}(\Avg_1+\Avg_2-\textsc{max}(u))$ so that Player 2 is indifferent between the two roles.
This means that the respective payoffs equal $\frac{1}{2}(\textsc{max}(u))-\frac{1}{2}(\Avg_2-\Avg_1)$ and $\frac{1}{2}(\textsc{max}(u))+\frac{1}{2}(\Avg_2-\Avg_1)$. This version with an endogenous sum of the prices induces a redistribution between players with respect to the default version of \PC, in which the prices sum up to 0 and payoffs equal $(\textsc{max}(u)-\Avg_2,\Avg_2)$. The main difference between the \PC with endogenous $\alpha$ is that the payoff difference only depends on the players' average payoff, and not on the total payoff $\textsc{max}(u)$. 

Section~\ref{sec:bidpandc} fleshes out a related idea for balancing payoffs in the \PC mechanism.

\section{Price \& Choose with Many players}\label{sec:manyplayers}

We now turn to a many-player version of the problem. We shall see that the previous result implies that a simple $n$-player variation of our \PC mechanism achieves subgame-perfect implementation of the efficient options.  In this mechanism, the first $n-1$ players propose, one after the other, a balanced price vector that they demand as payment from the next player in the order. The $n$th player chooses an option $a\in A$.  The endogenously set prices determine the transfers made between consecutive players. A balanced price vector remains a vector of prices such that the sum of prices equals zero.

Formally, the $P^{n-1}\& C$ mechanism works as follows :

\medskip

\noindent \textbf{Timing.}

\medskip

\noindent 1. Player 1 sets up a price vector $p^2\in P$.

\noindent 2. For each $i=2,\ldots,n-1$, Player $i$ sets up a price vector $p^{i+1}\in P$, knowing prices $p^2,\ldots,p^{i}$.

\noindent 3. Player $n$ chooses an option $a$ as the outcome given the prices 
$p^2,\ldots,p^{n}$.

\medskip

\noindent \textbf{Transfers.}

\medskip

Say that $a_j$ is the option chosen by Player $n$; this means that Player $n$ pays Player $n-1$ the price $p^n(a_j)$. In turn, Player $n-1$ has to pay the price $p^{n-1}(a_j)$ to Player $n-2$. This applies to any Player $m$ with $m=2,\ldots,n-1$, so that he pays $p^{m}(a_j)$ to player $m-1$ while receiving the transfer $p^{m+1}(a_j)$ from player $m+1$. Finally, Player $1$ receives the transfer $p^2(a_j)$ from Player 2, but makes no further payments. This entails that, assuming quasi-linear preferences, the payoffs associated to the option $a_j$ and the price vector $p=(p^2,\ldots,p^n)$ equal:

\noindent $g_n(p,a_j)=u_n(a_j)-p^{n}(a_j)$.\\
\noindent  $g_{m}(p,a_j)=u_{m}(a_j)-p^{m}(a_j)+p^{m+1}(a_j)$ for $m=2,\ldots,n-1$.\\
\noindent $g_1(p,a_j)=u_1(a_j)+p^{2}(a_j)$.

\begin{proposition} \label{prop:nplayers} The P$^{n-1}$\&C mechanism subgame-perfect implements the set of  efficient options. 
\end{proposition}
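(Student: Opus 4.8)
The plan is to prove the statement by downward induction on the position $m$ in the order of play, reducing each stage to the two‑player analysis of Proposition~\ref{prop:basic}. For $m=1,\dots,n-1$ and any fixed history of prices $p^2,\dots,p^m$ already chosen by players $1,\dots,m-1$, let $\Gamma_m$ denote the subgame that begins with player $m$'s choice of $p^{m+1}$. Write $W_m(a):=\sum_{i=m+1}^n u_i(a)$ for the aggregate continuation utility and $\hat u_m(a):=u_m(a)-p^m(a)$ for player $m$'s ``reduced'' utility over options, with the convention $p^1\equiv 0$ so that $\hat u_1=u_1$; note $\frac1k\sum_j W_m(a_j)=\sum_{i>m}\Avg_i$. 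The induction hypothesis on $\Gamma_m$ is the pair of claims: (i) in every subgame‑perfect equilibrium of $\Gamma_m$ the selected option lies in $\argmax_a\{\hat u_m(a)+W_m(a)\}$; and (ii) for every option in this argmax there is a subgame‑perfect equilibrium of $\Gamma_m$ that selects it.

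The base case $m=n-1$ is immediate from Proposition~\ref{prop:basic}: the subgame $\Gamma_{n-1}$ is literally the two‑player \PC game in which ``player 1'' is player $n-1$ with utility $\hat u_{n-1}$ and ``player 2'' is player $n$ with utility $u_n=W_{n-1}$, since player $n-1$'s payoff from a selected option $a$ is $\hat u_{n-1}(a)+p^n(a)$ and player $n$'s payoff is $u_n(a)-p^n(a)$. For the inductive step, suppose the hypothesis holds for $\Gamma_{m+1}$ and consider $\Gamma_m$. After player $m$ commits to a price vector $p^{m+1}$, the continuation is $\Gamma_{m+1}$ with $p^{m+1}$ now in the role of the fixed ``$p^m$''; since $\hat u_{m+1}(a)+W_{m+1}(a)=W_m(a)-p^{m+1}(a)$, the inductive hypothesis says the continuation selects some option from $\argmax_a\{W_m(a)-p^{m+1}(a)\}$ and that every such option is selectable in some continuation equilibrium. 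Thus, from player $m$'s vantage point the whole continuation behaves exactly like a single \PC ``chooser'' with utility $W_m$ facing the price $p^{m+1}$, and player $m$'s own payoff from a selected $a$ is $\hat u_m(a)+p^{m+1}(a)$ — precisely the payoff structure of ``player 1'' in the two‑player \PC game with utilities $(\hat u_m,W_m)$.

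It then suffices to re‑run the four‑step argument of Proposition~\ref{prop:basic} in this reduced game. Step~A gives the unique indifference price $p^{m+1,*}_j=W_m(a_j)-\sum_{i>m}\Avg_i$. Step~B shows that in any equilibrium of $\Gamma_m$ player $m$ must play $p^{m+1,*}$: otherwise $\argmax_a\{W_m(a)-p^{m+1}(a)\}$ is a proper subset of $A$, and the budget‑balanced $\ep$‑perturbation of Proposition~\ref{prop:basic} makes the currently selected option the unique maximizer of $W_m(\cdot)-p^{m+1}(\cdot)$ while raising its price, which by clause~(i) forces the continuation onto that option and strictly improves player $m$'s payoff. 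Step~C, by the identical $\ep$‑perturbation, shows the selected option maximizes $\hat u_m(a)+W_m(a)$. Step~D reverses this: player $m$ playing $p^{m+1,*}$, together with the continuation equilibrium furnished by clause~(ii), selects any desired maximizer of $\hat u_m+W_m$; this is an equilibrium because $\sum_j p^{m+1}_j=0$ forces $p^{m+1}(a)\le W_m(a)-\sum_{i>m}\Avg_i$ for any option $a$ a deviation could induce, capping the deviation payoff at the equilibrium level. This establishes the hypothesis for $\Gamma_m$. Applying the conclusion to $\Gamma_1$, where $\hat u_1=u_1$ and $u_1(a)+W_1(a)=\sum_{i=1}^n u_i(a)$, gives exactly that every subgame‑perfect equilibrium of the $P^{n-1}\&C$ game selects an efficient option and every efficient option is so selected; existence follows from clause~(ii) and nonemptiness of the set of efficient options.

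The main obstacle — the only place the argument goes beyond bookkeeping — is the inductive step's claim that the multi‑stage continuation $\Gamma_{m+1}$ is interchangeable with a single \PC chooser of utility $W_m$. This is not a literal instance of Proposition~\ref{prop:basic} (the ``responder'' is itself a game), so one must isolate precisely the two features of a responder that the proof of Proposition~\ref{prop:basic} actually uses — it always picks a utility‑maximizing option, and any utility‑maximizing option is consistent with some continuation equilibrium — and check that clauses~(i)–(ii) deliver exactly these. One should also verify the transfer‑chain bookkeeping, so that player $m$'s payoff genuinely splits into a sunk term $-p^m(a)$ plus the controllable term $p^{m+1}(a)$, and confirm that the Step~B–C perturbations stay in $P$.
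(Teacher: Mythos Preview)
Your proof is correct and follows essentially the same route as the paper: a downward induction on the mover's position that, at each stage, treats the entire continuation as a fictitious single ``chooser'' with aggregate utility $W_m=\sum_{i>m}u_i$ and then invokes the two-player analysis of Proposition~\ref{prop:basic}. The paper's argument is terser---it simply asserts the two inductive claims and appeals to Proposition~\ref{prop:basic} for the fictitious two-player game---whereas you make explicit both directions of full implementation (your clauses (i)--(ii)) and flag the one genuine subtlety, namely that the continuation is not literally a single responder but a game whose equilibrium set has exactly the two properties Steps~B--D need.
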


We prove Proposition~\ref{prop:nplayers} by recursively applying Proposition~\ref{prop:basic}.

\begin{proof}

We use Proposition~\ref{prop:basic} and proceed by induction. Fix a subgame-perfect Nash equilibrium $\sa=(\sa_1,\ldots,\sa_n)$. Define $p^{n+1}=p^1=(0,\ldots,0)$ so that, for any player $i$, if the option $a$ is the outcome, with the sequence of prices $p^2,\ldots,p^n$, then $i$'s payoff is $u_i(a) - p^i(a)+p^{i+1}(a).$

We claim that in, any subgame given by $(p^1,\ldots,p^{i-1})$, the outcome under $\sa$ must be such that, for $1\leq i\leq n$ :
\begin{enumerate}
    \item\label{it:pcn1} $\sa_n(p^2,\ldots,p^n)\in A$ maximizes $\sum_{j=i+1}^n u_j(a') + u_i(a')-p^i(a')$ over $a'\in A$.
    \item\label{it:pcn2} $\sum_{j=i+1}^n u_j(\sa_n(p)) -p^{i+1}(\sa_n(p)) = \mathrm{Average}(\sum_{j=i+1}^n u_j)$.
\end{enumerate}

The proof of this claim is by induction: First, by Proposition \ref{prop:basic}, the claim is true for any subgame $(p^1,\ldots,p^{n-1})$, as the resulting subgame is an instance of the two-player game with payoffs $u_{n-1}(a)-p^{n-1}(a)$ and $u_n(a)$. Second, if the claim is true for any subsequent subgame, then for any subgame $(p^1,\ldots,p^{i})$ we may consider a two-player game between player $i$ and a fictitious player $(i+1)'$. The former has  payoffs $u_i(a)-p^i(a)$ while  $(i+1)'$  payoffs' are $\sum_{j=i+1}^n u_j(a)$. By the inductive hypothesis, this game is an instance of the two-player game, and thus by Proposition~\ref{prop:basic}, the claim follows.

Now if $\sa$ is a subgame-perfect Nash equilibrium that results in the outcome $a^*$ and price sequence $p^1,\ldots,p^{n+1}$, we have that $a^*$ maximizes $\sum_{i}u_i(a)$ over $a\in A$, by property \eqref{it:pcn2}, and $p^1=0$. 
\end{proof}

\section{Non quasi-linear preferences \label{sec:nql}}

We now discuss the properties of the \PC mechanism when players' preferences are not quasi-linear in monetary transfers. In particular, suppose that when the outcome is $(a,t_1,t_2)$ then player 1's utility is $u_1(a)+ \zeta(t_1)$, and player 2's utility is $u_2(a)+\eta(t_2)$. 

Suppose that the functions $\eta,\zeta:\Re\rightarrow\Re$ are monotone (strictly) increasing, continuous and $\eta(\Re)=\zeta(\Re)=\Re$. The assumptions placed here on agents' utilities are called {\em additive separability}. Note that when  player 2  pays $p_a$ to player 1, player 2's utility decreases by $\eta(p_a)$ while player 1's utility increases by $\zeta(p_a)$. This implies that, for any option $a$ chosen at the second stage and any price vector $p$ set in the first stage, the payoffs associated to this mechanism equal $g^{\eta,\zeta}(p,a)=(u_1(a)+\zeta(p_a),u_2(a)-\eta(p_a))$.

Since $\eta$ admits an inverse function and $\eta,\zeta$ are defined over the real numbers, the problem of analyzing the subgame perfect equilibria of the \PC mechanism with payoffs given by $g^{\eta,\zeta}$ is identical to compute the subgame perfect equilibria when payoffs equal $h(p,a)=(h_1(p,a),h_2(p,a))=(u_1(a)+\beta(p_a),u_2(a)-p_a)$ with $\beta$ being monotone increasing. In other words, we may without loss set $\eta$ to be the identify function.

\begin{proposition}\label{prop:nonQL} If utilities are additively separable, \PC subgame-perfect implements the set of Pareto optimal options.
\end{proposition}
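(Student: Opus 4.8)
My plan is to replay the four steps of the proof of Proposition~\ref{prop:basic} for the reduced payoffs $h(p,a)=(u_1(a)+\beta(p_a),\,u_2(a)-p_a)$ described just above, where $\beta$ is strictly increasing and continuous (these properties, together with $\beta(\Re)=\Re$, are inherited from $\zeta$ and $\eta$). Since Player~2 is quasi-linear in the reduced game, Steps~A and~B carry over with essentially no change. Step~A: the unique $p^*\in P$ that makes $u_2(a_j)-p_j$ constant in $j$ is $p^*_j=u_2(a_j)-\Avg_2$, and at $p^*$ Player~2 obtains exactly $\Avg_2$. Step~B: if $\sigma$ is an SPNE but Player~2 is not indifferent at $\sigma_1=p$, the perturbation of Proposition~\ref{prop:basic} — raise the price of the selected option $a_i$ by $\ep$, raise the other maximizers' prices by $2\ep$, and lower one non-maximizer's price to preserve balance — keeps $a_i$ uniquely optimal for Player~2 for small $\ep>0$ while strictly increasing $u_1(a_i)+\beta(p_i)$, because $\beta$ is strictly increasing; hence $\sigma_1=p^*$. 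Only strict monotonicity of $\beta$ is needed up to here.

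Step~C carries the load. First I would record an averaging bound: whenever $p\in P$ and $a$ is a best response for Player~2 at $p$, averaging $u_2(a)-p_a\ge u_2(a_j)-p_j$ over $j$ and using $\sum_j p_j=0$ gives $u_2(a)-p_a\ge\Avg_2$, i.e.\ $p_a\le p^*_a$; hence Player~1's payoff at any history is at most $V:=\max_b\{u_1(b)+\beta(u_2(b)-\Avg_2)\}$. I then claim that in every SPNE $\sigma_2(p^*)$ attains $V$: otherwise, pick $b^*$ with $u_1(b^*)+\beta(p^*_{b^*})=V$; for each $\delta>0$, Player~1 can deviate to a price vector with $p_{b^*}=p^*_{b^*}-\delta$ and the remaining budget spread so that $b^*$ is Player~2's unique best response, securing $u_1(b^*)+\beta(p^*_{b^*}-\delta)$, which by continuity of $\beta$ exceeds $\sigma$'s equilibrium payoff once $\delta$ is small — a contradiction. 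Continuity of $\beta$ is indispensable, since Player~1 can approach but never impose the price $p^*_{b^*}$ while keeping $b^*$ strictly optimal for Player~2. Finally I would verify that the allocation $(a^*,p^*_{a^*},-p^*_{a^*})$ thus selected, in which Player~2 gets $\Avg_2$ and Player~1 gets $V$, is Pareto optimal: any $(b,t_1,t_2)$ with $t_1+t_2\le 0$ and $u_2(b)+t_2\ge\Avg_2$ has $t_1\le u_2(b)-\Avg_2$, so $u_1(b)+\beta(t_1)\le u_1(b)+\beta(u_2(b)-\Avg_2)\le V$, with equality forcing $t_1=u_2(b)-\Avg_2$ and $t_2=\Avg_2-u_2(b)$, hence $u_2(b)+t_2=\Avg_2$ — so no Pareto improvement is possible. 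This is where non-quasi-linearity bites: the Pareto frontier is no longer a hyperplane, so optimality of the specific allocation is checked directly rather than via a utilitarian sum.

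For Step~D I would give the converse construction and the identification with Pareto optimal options. For any $a^*$ with $u_1(a^*)+\beta(u_2(a^*)-\Avg_2)=V$, the profile $\sigma_1=p^*$, $\sigma_2(p^*)=a^*$, and $\sigma_2(p)$ an arbitrary best response for Player~2 at each $p\neq p^*$ (breaking ties against Player~1) is an SPNE: Player~2 is sequentially rational by construction, and by the averaging bound Player~1 gets $V$ at $p^*$ and at most $V$ after any deviation, so $\sigma_1$ is optimal. A short perturbation argument of the same flavour as in Step~C shows that $u_1(a)+\beta(u_2(a)-\Avg_2)=V$ holds precisely when the allocation $(a,p^*_a,-p^*_a)$ is Pareto optimal; thus the SPNE outcomes are exactly the Pareto optimal options the mechanism can deliver, which is the assertion. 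I expect Step~C to be the main obstacle — both the continuity argument (Player~1 cannot realize her ideal price, only approximate it) and the direct verification that the equilibrium allocation sits on the now-curved Pareto frontier; Steps~A, B, and~D are routine adaptations of Proposition~\ref{prop:basic}.
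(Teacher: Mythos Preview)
Your approach follows the same skeleton as the paper's --- Steps A and B carry over verbatim, and Step C proceeds via a price perturbation --- but you go further in two places. The paper's Step C establishes only that the equilibrium option $a=\sigma_2(p^*)$ is Pareto undominated \emph{within $A$} (no $a'\in A$ has $u_i(a')\geq u_i(a)$ for both $i$ with one strict), and then hand-waves both the converse and the allocation-level conclusion in a single closing sentence. You instead prove the sharper statement that $a$ maximizes $u_1(\cdot)+\beta(u_2(\cdot)-\Avg_2)$, using your averaging bound $p_a\le p^*_a$ to cap Player~1's value at $V$ after every deviation, and you verify directly that the resulting allocation $(a,p^*_a,-p^*_a)$ sits on the Pareto frontier. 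Your Step D is likewise explicit where the paper is not. What the paper's shortcut buys is brevity: checking Pareto undominance in $A$ needs only the single two-option perturbation from Proposition~\ref{prop:basic}'s Step C. What your route buys is precision --- you identify the equilibrium outcomes as exactly $\argmax_a\{u_1(a)+\beta(p^*_a)\}$, which is in general a strict subset of the options that are Pareto-undominated in $A$, so yours is the characterization that actually supports the full-implementation reading of the statement.
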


\begin{proof}
Let $(\sa_1,\sa_2)$ be a subgame-perfect Nash equilibrium.
Steps A and B from Proposition 1 remain true with non quasi-linear preferences. We denote by $p^*=\sa_1$ the price vector selected by Player 1 in equilibrium. Moreover, Player 2 is indifferent between all options given $p^*$. Indeed, as in the proof of Proposition~\ref{prop:basic}, $p^*_a = u_2(a)-\Avg_2$. 

Now we claim that the outcome $\sa_2(p^*)=a$ must be Pareto optimal in $A$. Suppose then, towards a contradiction, that there is some $a'\in A$ with $u_i(a')\geq u_i(a)$, $i=1,2$, with strict inequality for some $i$. Then, given that $\beta$ is strictly monotone increasing, we have that 
\[ 
u_1(a)+\beta(u_2(a)-\Avg_2) < u_1(a')+\beta(u_2(a')-\Avg_2).
\] Now, however,  again as in Step C of Proposition \ref{prop:basic}, Player 1 can decrease the price of $a'$ be $\epsilon$ and increase the price of $a$ by the same amount so that $a'$ is the unique best response by player 2. Player 1's payoff is then $u_1(a')+\beta(u_2(a')-\Avg_2-\ep) > u_1(a)+\beta(u_2(a)-\Avg_2)$, for $\ep>0$ small enough.

Clearly the previous argument holds regardless of which Pareto optimal outcome is chosen, and the proof of the result may be concluded along the same lines as the proof of Proposition~\ref{prop:basic}.
\end{proof}

\section{P\&C as a robust mechanism }\label{sec:robust}

The analysis so far hinges, of course, on complete information among the players, and on the assumption of equilibrium play. We now discuss two deviations from these assumptions: perturbations from the complete information assumption and approximate equilibrium with adversarial behavior.

Regarding the first robustness check, \cite{aghion2012subgame} study implementation with transfers under perturbations of the complete information assumption. They prove that any mechanism that subgame perfect implements a social choice function  which fails to be Maskin monotonic\footnote{Maskin monotonicity plays a central role in implementation theory since any Nash implementable social choice function needs to satisfy it.} under complete information admits a sequential equilibrium with undesirable outcomes when information is perturbed. Their result is stated for finite strategy sets as well as for countably infinite ones, so it does not exactly apply to \PC and our setting. For completeness, however, we now show by example that the social choice function that the \PC mechanism implements is not Maskin monotonic. To see why, consider the next example involving three alternatives and quasi-linear preferences over transfers. Let $A=\{a_1,a_2,a_3\}$ be the set of options and $u=(u_1,u_2)$ and $u'=(u_1,u'_2)$ denote two possible utility profiles with $u_1=u_2=(1,0,-1)$ and $u'_2=(1,-2,-1)$. The set of allocations equals
$\{(a',t_1,t_2)\in A\times \mathbf{R}^2 \text{ with } t_1+t_2=0\},$
where $a'$ is the implemented option and $t_1$,$t_2$ are the transfers of players 1 and 2 respectively. The sum of the transfers is zero since Player 2 makes a transfer to Player 1 in the \PC mechanism. Recall that a social choice function $f$ maps the set of utility vectors $U$ into the set of allocations. A social choice function is Maskin monotonic on $U$ if for any pair of utility vectors $u$, $u'\in U$, if $x=f(u)$ and
$$\{(i,y)\mid u_i(x)\geq u_i(y)\}\subseteq \{(i,y)\mid u'_i(x)\geq u'_i(y)\},$$
(i.e. no player ranks $x$ lower when moving from $u$ to $u'$) then $x=f(u')$. 
The allocation $x$ chosen by the $\PC$ mechanism where 1 is the first-mover equals $x=(a_1,1,-1)$ since $a_1$ is the efficient option and the price vector is $p=(1,0,-1)$. When moving from $u$ to $u'$, no player ranks $x$ lower since $(i)$ the utility function of 1 remains unchanged and $(ii)$ for player 2, the utilities of $a_1$ and $a_3$ remain unchanged whereas the utility of $a_2$ goes down. If the rule is Maskin monotonic, $x$ should be chosen in $u'$. Yet, the chosen allocation in $u'$ is $y=(a_1,2/3,-2/3)$ and clearly $x\neq y$, violating Maskin monotonicity. As said, however, the results of \cite{aghion2012subgame} do not strictly speaking apply to our model and it remains to see if our results suffer from the lack of robustness that they study.\footnote{Some of the main results in the paper of Aghion et al deal with direct revelation mechanisms, so they are focused on mechanisms that naturally differ from \PC.}

We now consider a different robustness test in the spirit of robust mechanism design (see \cite{carroll2019robustness} for an excellent overview). The deviations we have in mind relax the notion of equilibrium in two ways: First, players are only approximately optimizing; they are ``$\ep$-maximizers.'' Second, the assumption of approximate optimization gives rise to ambiguity in how the second player will choose, and we assume that Player 1 operates under a worst-case scenario. So Player 1 expects that the ambiguity will be resolved adversarially by Player 2. As we prove below, the \PC mechanism still achieves efficient implementation in the perturbed setting we have outlined.

 The assumption that Player 2 is adversarial could be motivated by ideas of  negative reciprocity (see \cite{fehr2021behavioral} for a recent contribution on this idea in implementation), in which players' utilities depend  negatively on the utility level of their opponent. Observe also that, in the equilibrium of \PC in Section 3, the opposite behavior arises: Player 2 is indifferent between all alternatives and he chooses the one maximizing Player 1's payoff (this occurs, as we show, endogenously; it is not an assumption). 

%A drawback of the analysis presented so far is that it hinges on the assumption that players are fully rational. While classic, there is ample evidence to challenge the empirical validity of this assumption. To tackle this weakness, we now show that the logic of the \PC mechanism extends beyond the usual rationality framework. This approach goes in line with the one proposed in robust mechanism design (see \cite{carroll2019robustness} for an excellent overview). As we show, our mechanism still achieves efficient implementation in this modified setting.

%Our objective is to understand the consequences of maxmin behavior in the described mechanism. 

Formally, we say that, for a fixed $\ep>0$, option $a$ is an $\ep$-maximizer for Player $2$ if there is no $a'$ that is better than $a$ by more than $\ep$. This is equivalent to saying that $ a \text{ is an } \ep\text{-maximizer for Player } 2 \Longleftrightarrow g_2(p,a)+\ep \geq g_2(p,a') \text{ for any } a'\neq a.$ We denote by $\beta^\ep_i(p)$ the set of $\ep$-maximizers at the price vector $p$ for Player $2$. The adversarial nature of Player 2 is then captured by setting $\sa_2(p) \in \argmin\{g_1(p,a) : a\in \beta_2^\ep(p) \}$. In words, Player 2 selects the option among $\ep$-maximizers that minimizes Player 1's payoff. 

Similarly, we say that Player 1 is $\ep$-maximizing when choosing a price vector $p\in P$ if $g_1(p,\sa_2(p))+\ep \geq g_1(p',\sa_2(p'))$ for all $p'\in P$. 

To sum up, we say that the strategy profile $\sa=(\sa_1,\sa_2)$ is a $\ep$-robust subgame perfect Nash equilibrium if 
\begin{enumerate}
\item $\sa_2(p)\in \argmin \{g_1(p,a):a\in\beta_2^\ep(p) \}$ for all $p\in P$,
    \item and Player 1 is $\ep$-maximizing when choosing $\sa_1\in P$.
\end{enumerate}
We say that $\sa_2(\sa_1)$ is the \df{outcome} of the $\ep$-robust subgame perfect Nash equilibrium $\sa$.

For simplicity we assume here that there is a unique efficient alternative. We expect that the argument generalizes to settings with more than one efficient option.

\begin{proposition}\label{prop:robust} For any $\ep>0$ small enough, the unique $\ep$-robust subgame perfect Nash equilibrium outcome of \PC is the efficient outcome. 
%Moreover, the set of equilibrium payoffs equals $(\textsc{max}(u)-\Avg_2-\gamma,\Avg_{2}+\gamma)$ for any $\gamma \in (0,\ep]$.
\end{proposition}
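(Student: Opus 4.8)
The plan is to study the single function $v:P\to\Re$ defined by $v(p)=\min\{g_1(p,a):a\in\beta_2^\ep(p)\}$, i.e.\ the payoff Player~1 secures by posting $p$ against an adversarial, $\ep$-optimizing Player~2. In any $\ep$-robust subgame perfect equilibrium $\sa=(\sa_1,\sa_2)$ one has $g_1(\sa_1,\sa_2(\sa_1))=v(\sa_1)$, and the requirement that Player~1 be $\ep$-maximizing reads $v(\sa_1)\geq \sup_{p\in P}v(p)-\ep$. So I would (i) bound $\sup_p v(p)$ from below by exhibiting one good price vector, (ii) bound $v(p)$ from above at every $p$ whose adversarial outcome differs from the efficient option $a^*$, and (iii) check that for small $\ep$ these two bounds collide, pinning the outcome to $a^*$; existence is then immediate because $\sup_p v(p)$ is finite.

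For (i): the indifference price $p^*$ of Step~A is now the worst possible move, since it leaves all of $A$ in $\beta_2^\ep(p^*)$ and the adversary then selects the \emph{least} efficient option. Player~1 must instead make $a^*$ the \emph{unique} $\ep$-maximizer, which I would do by a minimal tilt of $p^*$: put $\hat p_{a^*}=u_2(a^*)-\Avg_2-\ep$ and $\hat p_{j}=u_2(a_j)-\Avg_2+\tfrac{\ep}{k-1}$ for $a_j\neq a^*$. Then $\sum_j\hat p_j=0$, so $\hat p\in P$; moreover $g_2(\hat p,a^*)=\Avg_2+\ep$ while $g_2(\hat p,a_j)=\Avg_2-\tfrac{\ep}{k-1}$ for $a_j\neq a^*$, so $a^*$ strictly beats every other option by more than $\ep$ and $\beta_2^\ep(\hat p)=\{a^*\}$. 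Hence the adversary is forced to choose $a^*$, $v(\hat p)=g_1(\hat p,a^*)=u_1(a^*)+\hat p_{a^*}=\textsc{max}(u)-\Avg_2-\ep$, and $\sup_p v(p)\geq \textsc{max}(u)-\Avg_2-\ep$.

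For (ii): budget balance gives $\sum_a\bigl(u_2(a)-p_a\bigr)=k\,\Avg_2$, so $\max_a g_2(p,a)\geq \Avg_2$ and every $a\in\beta_2^\ep(p)$ satisfies $u_2(a)-p_a\geq \Avg_2-\ep$; therefore
\[
g_1(p,a)=u_1(a)+u_2(a)-\bigl(u_2(a)-p_a\bigr)\leq u_1(a)+u_2(a)-\Avg_2+\ep.
\]
Let $\delta=\textsc{max}(u)-\max\{u_1(a)+u_2(a):a\neq a^*\}>0$; here the hypothesis that the efficient option is unique is used. If the adversarial choice at $\sa_1$ were some $a_j\neq a^*$, then $v(\sa_1)\leq g_1(\sa_1,a_j)\leq \textsc{max}(u)-\delta-\Avg_2+\ep$, whereas step~(i) together with $\ep$-maximization gives $v(\sa_1)\geq \textsc{max}(u)-\Avg_2-2\ep$; these are incompatible once $\ep<\delta/3$. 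So for every such $\ep$ the outcome of any $\ep$-robust equilibrium is $a^*$. Finally, the same displayed inequality shows $v(p)\leq \textsc{max}(u)-\Avg_2+\ep$ for all $p$, so $\sup_p v(p)$ is finite and there is a price vector $\sa_1$ with $v(\sa_1)>\sup_p v(p)-\ep$; coupling it with the adversarial best response $\sa_2(p)\in\argmin\{g_1(p,a):a\in\beta_2^\ep(p)\}$ yields an $\ep$-robust subgame perfect equilibrium, whose outcome must then be $a^*$.

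The step I expect to be the main obstacle is (i): the conceptual point that the equilibrium price of the non-adversarial analysis is exactly what Player~1 must now avoid, together with locating a perturbation that makes $a^*$ strictly dominant for Player~2 yet costs Player~1 only order $\ep$. Everything else is bookkeeping: verifying $\hat p\in P$ and strictly separates $a^*$ (watching degenerate cases such as $k=2$, where $\tfrac{\ep}{k-1}=\ep$), and tracking how the $\ep$-slack in Player~1's optimization interacts with the fixed efficiency gap $\delta$ to recover the clean statement ``for $\ep$ small enough.''
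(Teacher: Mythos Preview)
Your proof is correct and takes essentially the same approach as the paper: your tilted price $\hat p$ is exactly the paper's $q^*$, and your upper bound on $g_1(p,a)$ via budget balance is the paper's key inequality (with a slightly weaker constant, $\ep$ in place of $\tfrac{k-1}{k}\ep$, which is harmless). The only organizational difference is that the paper verifies directly that $(q^*,\sa_2)$ is an $\ep$-robust equilibrium through a two-case deviation analysis, whereas you obtain existence more cheaply from the finiteness of $\sup_p v(p)$.
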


\begin{proof} 
Let $p^*$ be the price vector constructed in the proof of Proposition~\ref{prop:basic}. So  $g_2(p^*,a_j)=u_2(a_j)-p^*_j$ is constant in~$j$ and $p^*_j = u_2(a_j) - \Avg_2$ for each $a_j\in A$. 

Without loss of generality, we say that the (unique) efficient option is $a_1\in A$. So $\textsc{max}(u)=u_1(a_1)+u_2(a_1)>u_1(a_j)+u_2(a_j)$ for all $j\neq 1$.

Choose $\ep>0$ small enough so that 
\begin{equation}\label{eq:robust1}
     u_1(a_j) + u_2(a_j) +\frac{k-1}{k}\ep < u_1(a_1)+u_2(a_1) - 2\ep
\end{equation} for all $j\neq 1$.

Before we get started, observe that if $a_j=\sa_2(p)$, then $a_j\in\beta^\ep_2(p)$, and hence $u_2(a_j)-p_j+\ep\geq u_2(a_h)-p_h$ for all $h\neq j$. Therefore: 
\begin{equation}\label{eq:robust2}
    u_2(a_j)-p_j \geq \Avg_2 - \frac{k}{k-1}\ep.
\end{equation}

The proof is now divided in two steps. In Step A, we exhibit an $\ep$-subgame-perfect Nash equilibrium that selects $a_1$. In Step B, we show that, despite the potential multiplicity of equilibria, all of them select option $a_1$ as the equilibrium outcome. 

\noindent \textbf{Step A:} Consider the strategy profile $\sa$ defined by 
\begin{enumerate}
    \item[a)] $q^*=(p_1^*-\ep,p_2^*+\frac{\ep}{k-1},p_3^*+\frac{\ep}{k-1},\ldots,p_k^*+\frac{\ep}{k-1})$
    \item[b)] Player 2 chooses $a_1$ if $p=q^*$ and minimizes the payoffs of Player 1 over $\beta^2_\ep(p)$ otherwise. 
\end{enumerate}

To see why this is an $\ep$-equilibrium, observe that with $q^*$, the payoffs of both players are respectively equal to:
\begin{align*}
    g_1(q^*,\cdot) & =(u_1(a_1)+p_1^*-\ep,u_1(a_2)+p_2^* +\frac{\ep}{k-1},\ldots,u_2(a_k)+p_k^*+\frac{\ep}{k-1}) \\
    g_2(q^*,\cdot) & =(u_2(a_1)-p_1^*+\ep,u_1(a_2)-p_2^* -\frac{\ep}{k-1},\ldots,u_2(a_k)-p_k^*-\frac{\ep}{k-1}).
\end{align*}

Thus, $\beta_2^{\ep}(q^*)=\{a_1\}$, so that Player 2 chooses $a_1$ as we have claimed. To complete the proof, we need to check that Player 1 does not have a profitable deviation that exceeds their payoff by at least $\ep$. Assume, towards a contradiction, that Player 1 can find a price vector $p$ that ensures him a payoff strictly greater than $g_1(q^*,a_1)+\ep$. Let $a_j=\sa_2(p)$. Then we have $u_1(a_j)+p_j> u_1(a_1)+q^*_1+\ep$.

There are two cases to consider. The first case is when $a_j=a_1$. Then $u_1(a_1)+p_1> u_1(a_1)+q^*_1+\ep$ implies that $p_1>q_1^*+\ep = u_2(a_1)-\Avg_2$. Thus $\Avg_2 > u_2(a_1)-p_1$, and we conclude that there exists $a_j$ with $u_2(a_j)-p_j > u_2(a_1)-p_1$. 

At the same time, $a_1=\sa_2(p)$, which implies that $a_1\in\beta^\ep_2(p)$. But then  $u_2(a_j)-p_j > u_2(a_1)-p_1$  means that $a_j\in\beta^\ep_2(p)$, so $\sa_2(p)=a_1$ is only possible if $u_1(a_1)+p_1\leq u_1(a_j)+p_j$ (by the definition of $\sa_2$). Adding up these inequalities, we obtain that 
\[ 
u_1(a_1)+p_1 + u_2(a_1) - p_1 < u_1(a_j)+p_j + u_2(a_j)-p_j,
\] which contradicts the definition of $a_1$.

The second case to consider is when $a_j\neq a_1$. Then  the assumption that  $q^*$ is not an $\ep$-optimum yields that 
\[ 
u_1(a_j)+p_j> u_1(a_1)+q^*_1+\ep = u_1(a_1) + u_2(a_1) - \Avg_2.
\] Combine this inequality with  Equation~\eqref{eq:robust2} to obtain that  \[ 
u_1(a_j)+p_j  + u_2(a_j)-p_j > u_1(a_1) + u_2(a_1)  - \frac{k-1}{k}\ep,
\] contradicting~\eqref{eq:robust1}.

\noindent\textbf{Step B:} Consider any $\ep$-subgame perfect equilibrium $(p,\sa_2)$.  We claim that  $\sa_2(p)=a_1$, and suppose (towards a contradiction) that $\sa_2(p)=a_j\neq a_1$.

We first observe that $u_1(a_j)+p_j\geq u_1(a_1)+u_2(a_2)-\Avg_2-2\ep$, because Player 1 may select $q^*$ (as constructed in Step A) and guarantee a payoff of $u_1(a_j)+p_j\geq u_1(a_1)+u_2(a_2)-\Avg_2-\ep$. 

By~\eqref{eq:robust2}, we obtain \[ 
u_1(a_j)+p_j + u_2(a_j)-p_j\geq u_1(a_1)+u_2(a_2)-\Avg_2-2\ep + \Avg_2 - \frac{k-1}{k}\ep
= u_1(a_1)+u_2(a_2)-2\ep - \frac{k-1}{k}\ep,
\] contradicting~\eqref{eq:robust1}.
\end{proof}

\section{Bid, Price, and Choose \label{sec:bidpandc}}

The \PC mechanism implements a Pareto efficient option in the general model of social choice with transfers, but it does so with a particular set of transfers. In fact, the first mover is treated asymmetrically with respect to the other players.  Consider $P^{n-1}\&C$ and note that every player other than the first player in the order receives a payoff that equals $\Avg_i$, their average payoff from an option in $A$. The first moving player will receive, instead, a payoff that equals $\textsc{max}(u)-\sum_{j\neq 1}\Avg_j> \Avg_1$; a first-mover advantage. 

To correct the resulting unequal welfare distribution, we could proceed as was suggested after we stated the proof of Proposition~\ref{prop:basic}. Instead, here we focus on ideas suggested in the literature by \cite{jackson1992implementing} and \cite{perez2001bidding}; where biding in an auction determines the order of play. Specifically, all players bid to be the first mover, and the highest bidder wins (ties being broken by a uniform draw). The revenue from the winning bid is equally split among the rest of the players. Then the players play the \PC mechanism, where the player with the winning bid is the first mover. As we show, this bidding stage reduces inequality among players, and makes the equilibrium payoffs order-independent. That is, in equilibrium, players have no preferences ex-post over the stages at which to participate.

More formally, consider an auction for the role of choosing first. Each player submits a bid, $b_i\geq 0$. Let $W=\{i:b_i\equiv\max\{b_j:1\leq j\leq n \} \}$ be the set of \df{winners} --- the set of players who submitted the highest bids. One winner is chosen at random (uniformly) to pay their bid and become the first mover. The bid collected from the first mover is then distributed in equal shares among the rest of the players. So if $i\in W$ is selected, then $i$ pays $b_i$ and becomes the first mover, while all the remaining players receive a payment of  $\frac{1}{n-1}b_i$. The order of play among players who are not first is determined at random.

\begin{proposition}\label{prop:BPandC} Bid, price, and choose subgame-perfect implements the set of efficient options. Moreover, in any equilibrium, if $U^i$ are the equilibrium payoffs to players $i=1,\ldots,n$, then \[ 
U^i - U^j = \Avg_i - \Avg_j.
\] 
\end{proposition}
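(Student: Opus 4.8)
The plan is to analyze the game by backward induction, using Proposition~\ref{prop:nplayers} for the continuation game that follows the bidding stage. Fix the bidding stage and suppose player $i$ wins and becomes the first mover (paying $b_i$), with the remaining $n-1$ players ordered uniformly at random and then playing $P^{n-1}\&C$. By Proposition~\ref{prop:nplayers}, the option chosen in any subgame-perfect equilibrium of this continuation is efficient, so $\textsc{max}(u)$ is realized regardless of who wins; this already gives efficiency of the overall mechanism once we check that a subgame-perfect equilibrium of the whole game exists (the game is finite in the \PC stages and the bidding stage has a standard all-pay-like structure, so existence follows by the usual arguments, or by exhibiting one explicitly as below). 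Crucially, I would record the continuation payoffs: the winner/first-mover $i$ receives $\textsc{max}(u) - \sum_{j\neq i}\Avg_j$ gross of the bid, hence $\textsc{max}(u) - \sum_{j\neq i}\Avg_j - b_i$ net, while every non-winner $\ell$ receives their average payoff $\Avg_\ell$ plus their share $\tfrac{1}{n-1}b_i$ of the collected bid. Note the continuation payoff to a non-first player does not depend on \emph{which} position in the order they occupy, so the random ordering is payoff-irrelevant and we may treat the post-bidding value to each player as a function of the winner's identity and bid only.

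Next I would solve the bidding stage. Define, for each $i$, the quantity $v_i := \textsc{max}(u) - \sum_{j\neq i}\Avg_j - \Avg_i = \textsc{max}(u) - \sum_{j} \Avg_j$; observe that this ``surplus from being first'' is in fact the \emph{same} for every player, call it $\Delta = \textsc{max}(u) - \sum_j \Avg_j \geq 0$. Player $i$'s net payoff from winning at price $b$ is $\big(\textsc{max}(u)-\sum_{j\neq i}\Avg_j\big) - b = \Avg_i + \Delta - b$, whereas from losing when some winner pays $b$ it is $\Avg_i + \tfrac{1}{n-1}b$. Thus $i$ is \emph{indifferent} between winning and losing exactly when $\Delta - b = \tfrac{1}{n-1}b$, i.e. $b = \tfrac{n-1}{n}\Delta$. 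This pins down the equilibrium bid: I claim every player bidding $b^* = \tfrac{n-1}{n}\Delta$ is an equilibrium of the bidding stage. Bidding higher than $b^*$ wins for sure but gives $\Avg_i + \Delta - b < \Avg_i + \tfrac{1}{n-1}b^*$ (strictly worse), and bidding lower loses and yields $\Avg_i + \tfrac{1}{n-1}b^*$, the same as the candidate payoff — so no profitable deviation exists. (I would also argue any equilibrium has this form: if the top bid were below $b^*$, a player could profitably jump slightly above it; if above $b^*$, the winners are getting strictly less than by shading down to just below, contradiction — with care about the tie-breaking, which is why the symmetric profile at exactly $b^*$ is the clean equilibrium.)

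Finally I would compute equilibrium payoffs and subtract. With everyone bidding $b^* = \tfrac{n-1}{n}\Delta$, the realized winner $i$ gets $U^i = \Avg_i + \Delta - b^* = \Avg_i + \tfrac{1}{n}\Delta$, and each loser $\ell$ gets $U^\ell = \Avg_\ell + \tfrac{1}{n-1}b^* = \Avg_\ell + \tfrac{1}{n}\Delta$; note these formulas agree, so \emph{ex ante} (before the random selection of which winner pays) every player's payoff is $\Avg_i + \tfrac{1}{n}\Delta$, independent of realized roles. Hence $U^i - U^j = \Avg_i - \Avg_j$, as claimed, and efficiency was already established. The main obstacle I anticipate is the equilibrium-uniqueness/characterization part of the bidding stage: all-pay-type auctions with this tie-breaking rule require a careful case analysis to rule out asymmetric equilibria and to handle the indifference at the top bid cleanly; the efficiency claim and the payoff-difference formula, by contrast, follow quickly once the continuation payoffs from Proposition~\ref{prop:nplayers} are in hand and the symmetric bid $b^*$ is identified.
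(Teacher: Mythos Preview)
Your proposal is correct and follows essentially the same route as the paper: your surplus $\Delta$ is the paper's $\eta$, your indifference bid $b^*=\tfrac{n-1}{n}\Delta$ is exactly the paper's $b^*$, and the existence and (sketched) uniqueness arguments for the bidding stage mirror the paper's --- the paper likewise notes that the symmetric profile is not unique (any set of at least two winners at $b^*$ works) but that every equilibrium has winning bid $b^*$, which is what pins down the payoff formula $U^i-U^j=\Avg_i-\Avg_j$. One small terminological slip: the auction is first-price with redistribution, not all-pay, since only the selected winner pays; this does not affect your analysis, which treats the payments correctly.
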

  
\begin{proof}
Let $\eta = \sum_{i=1}^n u_i(a^*) - \sum_{i=1}^n \Avg_i$, where $a^*$ is an efficient outcome, and
$b^*=\frac{n-1}{n}\eta$. Consider a \PC subgame, after the order of play has been determined, and observe that, in any subgame-perfect equilibrium outcome of this subgame, the payoffs to a player $j$ who is not the first mover is $\Avg_j$, while the payoff to a player $i$ who is the
first mover is $\Avg_i + \eta$. Thus, if $i$ is a winner of the auction, and is randomly chosen to move first, their payoff is $\Avg_i + \eta-b_i$. Any player $j\neq i$ gets payoff $\Avg_j+b_i/(n-1)$. By definition, $\eta-b^*=b^*/(n-1)$, so the difference in payoffs is as in the statement of Proposition~\ref{prop:BPandC}.

Note first  that there exists a symmetric Nash equilibrium of the auction with $b_i=b^*$ for all $i$, as $\eta-b^*=b^*/(n-1)$ ensures that the payoff from winning and losing are the same. Bidding higher than $b^*$ would ensure winning, but with a strictly lower payoff; and bidding lower than $b^*$ would result in losing, but getting the same payoff as with a bid of $b^*$. 

This symmetric equilibrium is not unique, but all other equilibria have the same outcome. Indeed there is no Nash equilibrium with a single winner, as the winner would gain from lowering their bid. For any $W$ with at least two players, there is a Nash equilibrium with $b_i=b^*$ for $i\in W$ and $b_i< b^*$ for $i\notin W$. This follows from the same argument as above.

Finally, consider a profile of bids with a set of winners choosing $b'\neq b^*$. At $b'$ the payoff from winning differs from the payoff from losing. If the latter is higher, a winner has an incentive to lower their bid. If the latter is lower, they can benefit by raising their bid. So there is no Nash equilibrium in which the winning bid differs from $b^*$. 
\end{proof}

\section{Conclusion}\label{sec:conclusion}

We have considered implementation in the general social choice problem with money, and an arbitrary number of agents. Our proposed solution, the Price \& Choose mechanism, is a simple procedure for reaching efficient agreements. A remarkable feature of our approach is that it relies on prices, and does not require penalties, integer games, off-equilibrium threats, or lotteries; all classical techniques used by  mechanism designers to discipline players and achieve full implementation. Our solution requires the availability of money, but does not rely on the narrow assumption of quasilinear preferences.

The main shortcoming of our approach is that it assumes complete information and equilibrium behavior; we have addressed this weakness by considering a model with maxmin and $\ep$-optimizing behavior, and shown that the set of efficient options remains implemented by the \PC mechanism - whether experimentally subjects manage to reach efficient agreements through the described methods remains an empirical question.\footnote{Another possible extension is to understand how to achieve similar results in incomplete information settings with interdependent preferences (see \cite{ollar2022efficient} for recent work in this direction).} 

\bibliographystyle{aer}
\bibliography{priceandchoose}

\appendix

\end{document}